\documentclass[letterpaper,10pt,conference]{ieeeconf}

\IEEEoverridecommandlockouts

\usepackage{amsmath,amssymb,amsfonts}
\usepackage{graphicx}
\usepackage{textcomp}
\usepackage{booktabs}
\usepackage{multirow}
\usepackage{float}

\newtheorem{definition}{Definition}
\newtheorem{theorem}{Theorem}
\newtheorem{proposition}{Proposition}
\newtheorem{lemma}{Lemma}

\newtheorem{assumption}{Assumption}

\newcommand{\R}{\mathbb{R}}
\newcommand{\cC}{\mathcal{C}}

\newcommand{\cH}{\mathcal{H}}

\newcommand{\cK}{\mathcal{K}}
\newcommand{\cU}{\mathcal{U}}
\newcommand{\cX}{\mathcal{X}}
\newcommand{\norm}[1]{\left\lVert #1 \right\rVert}
\newcommand{\abs}[1]{\left\lvert #1 \right\rvert}
\newcommand{\trans}{\mathsf{T}}

\title{\LARGE \bf
Robust Safety Filtering for Input-Constrained Underactuated Linear Systems
}

\author{Muhamad Rausyan Fikri$^{1}$%
\thanks{$^{1}$Muhamad Rausyan Fikri is with the Faculty of Engineering and Natural Sciences, Tampere University, 33720 Tampere, Finland, {\tt\small muhamad.fikri@tuni.fi}.}%
}

\begin{document}

\maketitle
\thispagestyle{empty}
\pagestyle{empty}

\begin{abstract}
We present a robust safety-filtering framework for input-constrained underactuated linear systems subject to unknown disturbances.
A baseline $H_\infty$ input is derived from a zero-sum differential game, while a disturbance observer supplies an estimate and a transient error bound.
The baseline input is adjusted using the disturbance estimate, while the estimate and its error bound are used to define robust high-order control barrier function constraints; forward invariance holds as long as the admissible-input set remains nonempty.
For scalar-input systems, pointwise feasibility is determined from an exact input interval, and the interval width defines the feasibility margin.
A finite-horizon $H_\infty$ performance balance accounts for the accumulated deviation of the applied input from the baseline $H_\infty$ policy.
Simulations on a linearized two-wheeled balancing robot show how position and body-pitch constraints compete for the same bounded wheel-torque input.
\end{abstract}

\begin{keywords}
Disturbance observers, differential games, $H_\infty$ control, high-order control barrier functions, underactuated systems.
\end{keywords}

\section{Introduction}

Autonomous systems must maintain performance in the presence of disturbances while satisfying safety constraints, but these objectives require different closed-loop guarantees.
An $H_\infty$ controller derived from a zero-sum differential game bounds the disturbance-to-output $\mathcal L_2$ gain \cite{basar1995}, but this integral guarantee does not rule out transient violations of state or actuator limits.
This gap is especially important in underactuated systems, where several safety outputs depend on fewer independent control channels \cite{freire2025}.
With bounded actuation, the resulting safety inequalities may impose incompatible requirements on a shared input, leaving no feasible control value.

Control barrier functions (CBFs) enforce forward-invariance conditions through online input constraints \cite{ames2017}, while high-order CBFs (HOCBFs) extend this construction to safety outputs of higher relative degree \cite{xiao2022}.
Robust CBF and input-to-state safety formulations account for bounded uncertainty, high relative degree constraints, and actuator limits \cite{jankovic2018,kolathaya2019,breeden2023}.
However, using a fixed global uncertainty set can be unnecessarily conservative when the realized disturbance is substantially smaller than its assumed worst-case bound.
Disturbance-observer-based CBF methods reduce this conservatism by estimating the realized disturbance and bounding the residual estimation error \cite{alan2023,wangxu2023,dasburdick2025,sun2024}.
Existing results, however, do not jointly consider an $H_\infty$ baseline with disturbance compensation, multiple HOCBF constraints sharing a bounded actuator, exact scalar-input feasibility, and a finite-horizon balance that explicitly accounts for deviations from the baseline control policy.

Our framework uses the observer output in both nominal control and safety filtering.
The disturbance estimate adjusts the baseline $H_\infty$ input, while the estimate and its certified error bound define the robust HOCBF constraints.
The safety-filter QP then selects the admissible input closest to the disturbance-compensated nominal input.

The analysis yields a time-varying robust HOCBF condition based on a transient time-varying estimation-error bound for disturbances with bounded rates of change.
For scalar-input systems, the robust HOCBF and actuator constraints reduce to an exact safe-input interval, giving necessary and sufficient pointwise feasibility conditions and an explicit margin.
A finite-horizon $H_\infty$ performance balance then accounts for the accumulated deviation of the applied input from the baseline $H_\infty$ policy.
The numerical study considers a two-wheeled balancing robot whose position and body-pitch constraints share one bounded wheel-torque input.

The remainder of this letter is organized as follows: Section~II reviews the preliminaries, Section~III develops the proposed safety filter, Section~IV presents the numerical evaluation, and Section~V is the conclusion.

\section{Preliminaries}

\noindent\textit{Notation:}
The sets of natural, real, positive real, and nonnegative real numbers are denoted by $\mathbb{N}$, $\mathbb{R}$, $\mathbb{R}_{+}$, and $\mathbb{R}_{\geq0}$, respectively.
The notation $\norm{\cdot}$ denotes the Euclidean norm for vectors and the Frobenius norm for matrices.
For $R\succ0$, the weighted norm is defined by $\norm{v}_R^2:=v^\trans Rv$.
For a measurable signal $w:[0,\infty)\rightarrow\R^r$, define $\norm{w}_{\mathcal L_2}:=\left(\int_0^\infty\norm{w(t)}^2dt\right)^{1/2}$ whenever the integral is finite.
A zero vector of appropriate dimension is denoted by $0$, and $\partial\cC$ denotes the boundary of a set $\cC$.

We first recall the CBF preliminaries for nonlinear control-affine systems of the form
\begin{equation}
    \dot x=f(x)+g(x)u,\quad x\in\cX\subseteq\R^n,\quad u\in\cU\subseteq\R^m.
    \label{eq:nonlinear-affine}
\end{equation}
The functions $f:\cX\rightarrow\R^n$ and $g:\cX\rightarrow\R^{n\times m}$ are assumed to be locally Lipschitz continuous on $\cX$.
A locally Lipschitz feedback controller $u=k(x)$, with $k:\cX\rightarrow\cU$, induces the closed-loop vector field
\begin{equation}
    \dot x=f(x)+g(x)k(x)\triangleq f_{\mathrm{cl}}(x).
    \label{eq:closed-loop-affine}
\end{equation}
For every initial condition $x(t_0)=x_0\in\cX$, there exists a maximal interval $I(x_0)=[t_0,t_{\max})$ on which \eqref{eq:closed-loop-affine} admits a unique trajectory.
Throughout this paper, $f_{\mathrm{cl}}$ is assumed to be forward complete, i.e., $I(x_0)=[t_0,\infty)$, and $\cU$ is assumed to be a convex polytope.

The control-affine model \eqref{eq:nonlinear-affine} includes Euler--Lagrange systems with generalized coordinates $q\in\R^{n_q}$, inertia matrix $M(q)\succ0$, and actuation distribution $B_q(q)$ \cite{freire2025}.
Such a system is underactuated when $\operatorname{rank}B_q(q)<n_q$.
Using $x=[q^\trans\ \dot q^\trans]^\trans$, linearization about an equilibrium gives $\dot x=Ax+Bu$, where $B=[0^\trans\ (M^{-1}(q_{\mathrm e})B_q(q_{\mathrm e}))^\trans]^\trans$.
Since $M(q_{\mathrm e})$ is nonsingular, $\operatorname{rank}B=\operatorname{rank}B_q(q_{\mathrm e})$, and the local underactuation structure is preserved.

\subsection{Control Barrier Functions}

Let $\cC\subseteq\cX$ be the zero-superlevel set of a continuously differentiable function $h:\cX\rightarrow\R$,
\begin{equation}
\begin{aligned}
    \cC&:=\{x\in\cX:h(x)\geq0\},\\
    \partial\cC&:=\{x\in\cX:h(x)=0\},\\
    \operatorname{Int}(\cC)&:=\{x\in\cX:h(x)>0\}.
\end{aligned}
\label{eq:safe-set}
\end{equation}
The set $\cC$ is forward invariant if every trajectory of \eqref{eq:closed-loop-affine} initialized in $\cC$ remains in $\cC$ for all future time.
The closed-loop system is safe with respect to $\cC$ when $\cC$ is forward invariant.

For a sufficiently differentiable scalar function $h$, define the higher-order Lie derivatives recursively as $L_f^0h(x):=h(x)$, $L_f^ih(x):=\nabla L_f^{i-1}h(x)^\trans f(x)$, and $L_gL_f^{i-1}h(x):=\nabla L_f^{i-1}h(x)^\trans g(x)$.

\begin{definition}
The input relative degree (IRD) of $h$ on a set $\mathcal S\subseteq\cX$ is an integer $r\leq n$ such that, for every $x\in\mathcal S$,
\begin{equation}
    L_gL_f^{r-1}h(x)\neq0,\quad L_gL_f^{i-1}h(x)=0,\quad i=1,\ldots,r-1.
    \label{eq:input-relative-degree}
\end{equation}
The control input first appears in the $r$th time derivative of $h$.
\end{definition}

\begin{definition}[\cite{ames2017}]
Let $\cC$ be defined by \eqref{eq:safe-set}, and let $h$ have IRD one on $\cC$.
The function $h$ is a control barrier function (CBF) for \eqref{eq:nonlinear-affine} on $\cC$ if $\nabla h(x)\neq0$ for every $x\in\partial\cC$ and there exists an extended class-$\cK_\infty$ function $\alpha\in\cK_{\infty,e}$ such that
\begin{equation}
    \sup_{u\in\cU}\left[L_fh(x)+L_gh(x)u\right]\geq-\alpha\big(h(x)\big)
    \label{eq:cbf-definition}
\end{equation}
for every $x\in\cC$.
\end{definition}

A continuous function $\alpha:\R\rightarrow\R$ belongs to $\cK_{\infty,e}$ if it is strictly increasing, satisfies $\alpha(0)=0$, and obeys $\lim_{s\rightarrow\infty}\alpha(s)=\infty$ and $\lim_{s\rightarrow-\infty}\alpha(s)=-\infty$.

\begin{theorem}[\cite{ames2017}]
If $h$ is a CBF for \eqref{eq:nonlinear-affine} on $\cC$, then any locally Lipschitz controller $k:\cX\rightarrow\cU$ satisfying
\begin{equation}
    L_fh(x)+L_gh(x)k(x)\geq-\alpha\big(h(x)\big),\quad \forall x\in\cC,
    \label{eq:cbf-condition}
\end{equation}
renders \eqref{eq:closed-loop-affine} safe with respect to $\cC$.
\end{theorem}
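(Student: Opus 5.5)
The plan is to follow the standard comparison-lemma argument. Fix $x_0\in\cC$ and let $x(t)$ be the unique trajectory of \eqref{eq:closed-loop-affine} on $[t_0,\infty)$; it exists and is unique because $k$ is locally Lipschitz and $f_{\mathrm{cl}}$ is assumed forward complete. Define $\eta(t):=h(x(t))$. This function is continuously differentiable, since $h$ is $C^1$ and $x(\cdot)$ is $C^1$. Its derivative is
\[
\dot\eta(t)=L_fh(x(t))+L_gh(x(t))k(x(t)).
\]
The goal is to show that $\eta(t)\geq0$ for all $t\geq t_0$.

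The main obstacle is that \eqref{eq:cbf-condition} is only assumed on $\cC$, so we cannot simply apply the comparison lemma globally. I would argue by contradiction instead. Suppose $\eta(t_1)<0$ for some $t_1>t_0$. Let
\[
\tau:=\sup\{t\in[t_0,t_1]:\eta(t)\geq0\}.
\]
By continuity, $\eta(\tau)=0$ and $\eta(t)<0$ on $(\tau,t_1]$. Moreover, $x(\tau)\in\partial\cC\subseteq\cC$, so by \eqref{eq:cbf-condition}
\[
\dot\eta(\tau)\geq-\alpha(\eta(\tau))=-\alpha(0)=0.
\]
This alone does not give a contradiction, because the derivative may vanish at $\tau$.

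To resolve this, I would use a comparison argument. Since $\alpha$ is only continuous, I would replace it by a locally Lipschitz extended class-$\cK_\infty$ function, or equivalently apply a comparison lemma for scalar ODEs with continuous right-hand side using minimal and maximal solutions. The Lipschitz case is the standard setting in \cite{ames2017}. The cleaner route is to note that the bound $\dot\eta\geq-\alpha(\eta)$ holds at every time where $x(t)\in\cC$, i.e., wherever $\eta(t)\geq0$. On an interval where $\eta$ first becomes negative, we can then compare with the solution of $\dot y=-\alpha(y)$, $y(\tau)=0$, whose unique solution is $y\equiv0$ when $\alpha$ is locally Lipschitz.

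A more robust alternative, which I would try first to avoid these technicalities, uses the standing convention that \eqref{eq:cbf-condition} holds on an open neighborhood of $\cC$. This is typical when $\cX$ is open and $h$ is a CBF on a set $D\supset\cC$. With that convention, the comparison lemma applied to $\dot\eta\geq-\alpha(\eta)$ near $\tau$ gives $\eta(t)\geq y(t)=0$ for $t$ slightly larger than $\tau$. This contradicts the definition of $\tau$.

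Finally, since $x_0\in\cC$ was arbitrary, $\cC$ is forward invariant, i.e., the closed loop is safe. The condition $\nabla h\neq0$ on $\partial\cC$ is not needed for invariance itself. It only ensures that $\cC$ is regular, so that $\partial\cC$ and $\operatorname{Int}(\cC)$ behave as described in \eqref{eq:safe-set}.
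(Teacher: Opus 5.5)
There is a genuine gap, at exactly the step you flag as the main obstacle. The paper gives no proof of this statement; it quotes it from \cite{ames2017}. For the version stated here, with the inequality imposed only on $\cC$ and $\nabla h\neq0$ on $\partial\cC$, the standard proof goes through Nagumo's theorem.

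Your first-exit-time setup is sound up to $\eta(\tau)=0$ and $\dot\eta(\tau)\ge0$. The step meant to close the argument then fails. On $(\tau,t_1]$ you have $\eta<0$, i.e.\ $x(t)\notin\cC$, and \eqref{eq:cbf-condition} is not assumed there. So there is no inequality $\dot\eta\ge-\alpha(\eta)$ on that interval for a comparison lemma to use. Saying the bound holds ``wherever $\eta\ge0$'' and then comparing ``on an interval where $\eta$ first becomes negative'' is circular. Lipschitz regularity of $\alpha$ is not the issue, and a locally Lipschitz majorant of $\alpha$ need not exist (e.g.\ $\alpha(s)=\sqrt s$ near $0$). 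Your fallback assumes the inequality on an open neighborhood of $\cC$. That is a strictly stronger hypothesis, so it proves a different result. Under it you would not even need a comparison lemma: $\dot\eta\ge-\alpha(\eta)>0$ wherever $\eta<0$ near $\tau$, which contradicts $\eta<0$ on $(\tau,t_1]$ by the mean value theorem.

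The missing ingredient is the one you discard in your last sentence. When the inequality holds only on $\cC$, the condition $\nabla h\neq0$ on $\partial\cC$ is essential; here it is also implied by the IRD-one requirement $L_gh\neq0$ on $\cC$. For example, take $h(x)=-x^2$ and $\dot x=1$. Then $\cC=\partial\cC=\{0\}$ and $\dot h(0)=0\ge-\alpha(0)$, yet the trajectory from $0$ leaves $\cC$ immediately.

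The standard argument runs as follows.
\begin{itemize}
\item At each $y\in\partial\cC$, $\nabla h(y)\neq0$ identifies the tangent cone of $\cC$ at $y$ with the half-space $\{v:\nabla h(y)^\trans v\ge0\}$.
\item Evaluating \eqref{eq:cbf-condition} at $h(y)=0$ gives $\nabla h(y)^\trans f_{\mathrm{cl}}(y)\ge-\alpha(0)=0$, so $f_{\mathrm{cl}}$ is subtangential on $\partial\cC$.
\item Nagumo's theorem, with uniqueness of solutions coming from local Lipschitzness of $f_{\mathrm{cl}}$ (not of $\alpha$), then gives forward invariance.
\end{itemize}

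If you want to keep your first-exit-time structure, track $d(t):=\mathrm{dist}(x(t),\cC)$ instead of $\eta$. Compare $x(t+s)$ with $y+sf_{\mathrm{cl}}(y)$ for a nearest point $y\in\partial\cC$. The nonvanishing gradient gives $\mathrm{dist}(y+sf_{\mathrm{cl}}(y),\cC)=o(s)$. The local Lipschitz constant $L$ of $f_{\mathrm{cl}}$ then gives $D^+d\le Ld$ near $\tau$. Since $d(\tau)=0$, this forces $d\equiv0$ just after $\tau$, contradicting $\eta<0$ on $(\tau,t_1]$.
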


Given a possibly unsafe locally Lipschitz nominal controller $k_d:\cX\rightarrow\cU$, a minimally invasive safe controller can be obtained from the CBF quadratic program
\begin{equation}
\begin{aligned}
    k^\star(x)=\arg\min_{u\in\cU}\quad &\frac12\norm{u-k_d(x)}^2\\
    \mathrm{s.t.}\quad &L_fh(x)+L_gh(x)u\geq-\alpha\big(h(x)\big).
\end{aligned}
\label{eq:standard-cbf-qp}
\end{equation}
When $L_gh(x)\equiv0$, the CBF constraint is independent of the decision variable $u$, and \eqref{eq:standard-cbf-qp} cannot generally synthesize a safe control input.
This situation is characterized by an IRD greater than one.
The safety function must then be differentiated until the input appears, motivating a high-order CBF.

Let $h$ have IRD $r\geq2$.
Define $\phi_0(x):=h(x)$ and $\phi_i(x):=\dot\phi_{i-1}(x)+\alpha_i(\phi_{i-1}(x))$ for $i=1,\ldots,r-1$, where $\alpha_i\in\cK_{\infty,e}$ is sufficiently differentiable.
Since $h$ has IRD $r$, the control input does not appear in $\phi_i$ for $i=0,\ldots,r-1$.
Define
\begin{equation}
    \cC_i:=\{x\in\cX:\phi_i(x)\geq0\},\quad \cC_H:=\bigcap_{i=0}^{r-1}\cC_i.
    \label{eq:hocbf-sets}
\end{equation}

\begin{definition}[\cite{xiao2022,dasburdick2025}]
Let $\cC_H$ be defined by \eqref{eq:hocbf-sets}.
The function $h$ is a high-order control barrier function (HOCBF) of IRD $r$ for \eqref{eq:nonlinear-affine} on $\cC_H$ if $\nabla\phi_i(x)\neq0$ for every $x\in\partial\cC_i$, $i=0,\ldots,r-1$, and there exists $\alpha_r\in\cK_{\infty,e}$ such that
\begin{equation}
    \sup_{u\in\cU}\left[L_f\phi_{r-1}(x)+L_g\phi_{r-1}(x)u\right]\geq-\alpha_r\big(\phi_{r-1}(x)\big)
    \label{eq:hocbf-definition}
\end{equation}
for every $x\in\cC_H$.
\end{definition}

\begin{theorem}[\cite{xiao2022}]
If $h$ is an HOCBF for \eqref{eq:nonlinear-affine} on $\cC_H$, then any locally Lipschitz controller $k:\cX\rightarrow\cU$ satisfying
\begin{equation}
    L_f\phi_{r-1}(x)+L_g\phi_{r-1}(x)k(x)\geq-\alpha_r\big(\phi_{r-1}(x)\big)
    \label{eq:hocbf-condition-general}
\end{equation}
for every $x\in\cC_H$ renders $\cC_H$ forward invariant and, consequently, preserves $h(x(t))\geq0$.
\end{theorem}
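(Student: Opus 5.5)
The plan is to reduce the HOCBF result to repeated use of the first-order CBF theorem (Theorem~1 in this section), in its underlying comparison-lemma form, applied along the chain $\phi_{r-1},\phi_{r-2},\ldots,\phi_0$. I fix an initial condition $x_0\in\cC_H$ and let $x(t)$ be the closed-loop trajectory on $[t_0,\infty)$, which exists by the standing forward-completeness assumption. Along this trajectory I set $\psi_i(t):=\phi_i(x(t))$. By construction of the $\phi_i$, and because $h$ has IRD $r$, every $\phi_i$ with $i\le r-1$ is independent of $u$. Its derivative along the flow is therefore $\dot\psi_{i-1}=L_f\phi_{i-1}(x)$ for $i\le r-1$. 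The recursion then gives the scalar identity
\[
\dot\psi_{i-1}(t)=\psi_i(t)-\alpha_i\big(\psi_{i-1}(t)\big),\qquad i=1,\ldots,r-1 .
\]
At the top level, the condition \eqref{eq:hocbf-condition-general} gives $\dot\psi_{r-1}(t)\ge-\alpha_r(\psi_{r-1}(t))$ whenever $x(t)\in\cC_H$.

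The first step is to show that $\psi_{r-1}(t)\ge0$ for all $t\ge t_0$. From $\dot\psi_{r-1}\ge-\alpha_r(\psi_{r-1})$ and $\psi_{r-1}(t_0)\ge0$, the comparison lemma bounds $\psi_{r-1}$ from below by the solution of $\dot y=-\alpha_r(y)$ with $y(t_0)=\psi_{r-1}(t_0)\ge0$. That solution stays nonnegative because $\alpha_r(0)=0$ and, since $\alpha_r$ is locally Lipschitz, $y\equiv0$ is the unique solution through zero.

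The second step is a downward induction. Suppose $\psi_i(t)\ge0$ for all $t$. Then $\dot\psi_{i-1}\ge-\alpha_i(\psi_{i-1})$, and $\psi_{i-1}(t_0)\ge0$ because $x_0\in\cC_{i-1}$. The same comparison argument gives $\psi_{i-1}\ge0$ for all $t$. Iterating down to $i=1$ yields $\psi_0=h(x(t))\ge0$ together with $x(t)\in\cC_i$ for every $i$, which is exactly forward invariance of $\cC_H$.

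The main obstacle is circularity. The top-level inequality is only assumed on $\cC_H$, but using it requires already knowing that $x(t)\in\cC_H$. I would handle this with a first-exit-time argument. Suppose the trajectory leaves $\cC_H$, and let $\tau:=\sup\{t\ge t_0 : x(s)\in\cC_H \text{ for all } s\in[t_0,t]\}<\infty$. By continuity, $x(\tau)\in\cC_H$. On $[t_0,\tau]$ the induction above applies, so every $\psi_i$ is nonnegative there. To push past $\tau$, I would argue near the boundary using the nondegeneracy condition $\nabla\phi_i\neq0$ on $\partial\cC_i$ together with the derivative bounds at $\tau$: each $\psi_i$ with $\psi_i(\tau)=0$ has $\dot\psi_i(\tau)\ge0$. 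The cleaner alternative, which I would try first, is to note that the differential inequalities hold on a neighborhood of $\cC_H$ by continuity. The comparison lemma then applies on a short interval $[\tau,\tau+\epsilon)$ and keeps every $\psi_i\ge0$ there, contradicting the definition of $\tau$. If this continuity extension is not available, I would fall back to Nagumo's theorem applied to the intersection $\cC_H$, using the regularity assumption to justify the tangent-cone characterization.
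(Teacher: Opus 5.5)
Your downward chain ($\psi_{r-1}\ge0\Rightarrow\psi_{r-2}\ge0\Rightarrow\cdots\Rightarrow\psi_0=h\ge0$) is the argument behind the cited result. The paper gives no proof of its own, and its proof of Proposition~\ref{prop:observer-certified-safety} runs the same chain with exponential integrating factors. In both, however, the top-level inequality is imposed along the entire trajectory (in Proposition~\ref{prop:observer-certified-safety}, $u_a\in\mathcal U_S$ for all $t$), so no circularity arises.

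Under the literal hypothesis (inequality only for $x\in\cC_H$), you correctly identify the circularity as the main obstacle, but none of your remedies resolves it. The one you would try first is false, because continuity does not extend a non-strict inequality off a closed set. The function $G(x):=L_f\phi_{r-1}(x)+L_g\phi_{r-1}(x)k(x)+\alpha_r(\phi_{r-1}(x))$ can vanish on part of $\partial\cC_H$ and be negative at every nearby exterior point. After $\tau$ you then only get $\dot\psi_{r-1}\ge-\alpha_r(\psi_{r-1})-\varepsilon$, which does not give $\psi_{r-1}\ge0$. The observation $\dot\psi_i(\tau)\ge0$ for active indices is a first-order condition at a single instant and cannot exclude exit. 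Nagumo's theorem is the right tool, but $\nabla\phi_i\neq0$ on $\partial\cC_i$ only makes each $T_{\cC_i}(x)$ a half-space. For the intersection one only has $T_{\cC_H}(x)\subseteq\bigcap_iT_{\cC_i}(x)$, and the inclusion can be strict. For example, two closed unit disks tangent at the origin have nonzero gradients there, yet their intersection is $\{0\}$. A field along the common tangent satisfies both active inequalities and still leaves. The missing ingredient is a constraint qualification on the active gradients.

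There are two ways to close the gap. The first is to adopt the along-trajectory reading, in which case your two steps are already a complete proof. The second is to supply the qualification from the relative-degree structure, as follows.
\begin{itemize}
\item If the IRD holds on an open neighborhood of $\cC_H$, the standard relative-degree lemma gives linear independence of $dh,dL_fh,\ldots,dL_f^{r-1}h$.
\item Each $\nabla\phi_i$ equals $\nabla L_f^ih$ plus a combination of the $\nabla L_f^kh$ with $k<i$, so $\nabla\phi_0,\ldots,\nabla\phi_{r-1}$ are linearly independent.
\item Under this LICQ, $T_{\cC_H}(x)=\{v:\nabla\phi_i(x)^\trans v\ge0 \text{ for all active } i\}$.
\item At active indices, $\nabla\phi_i^\trans f_{\mathrm{cl}}=\phi_{i+1}\ge0$ for $i<r-1$, and $\nabla\phi_{r-1}^\trans f_{\mathrm{cl}}\ge-\alpha_r(0)=0$.
\item Nagumo's theorem, with uniqueness from the locally Lipschitz closed loop, then gives invariance.
\end{itemize}

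A minor point: $\alpha_r\in\cK_{\infty,e}$ is only assumed continuous, so Lipschitz uniqueness of the comparison solution is unavailable, but you do not need it. Suppose $\dot\psi\ge-\alpha_r(\psi)$ holds, $\psi(t_0)\ge0$ and $\psi(t_1)<0$. On the last interval $(t_*,t_1]$ where $\psi<0$ you have $\dot\psi>0$ while $\psi(t_*)=0$, which is a contradiction.
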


Using the HOCBF condition, an HOCBF-QP safety filter incorporating \eqref{eq:hocbf-condition-general} can be constructed analogously to the CBF-QP in \eqref{eq:standard-cbf-qp}.

\section{Observer-Based Robust Safety Filtering}
\label{sec:proposed-framework}

Figure~\ref{fig:proposed-framework} presents the block diagram of the proposed control architecture.
The disturbance estimate adjusts the baseline $H_\infty$ input before the robust HOCBF-QP enforces the safety and actuator constraints.
When several constraints share one actuator, the safety guarantee depends on the resulting admissible-input set remaining nonempty.
The final subsection derives a finite-horizon $H_\infty$ performance balance that accounts for the accumulated deviation of the applied input from the baseline policy.
Hereafter, $u$, $u_a$, and $u^\star$ denote a generic control input, the applied input, and the robust HOCBF-QP solution, respectively.
\begin{figure}[H]
    \centering
    \includegraphics[width=1\columnwidth]{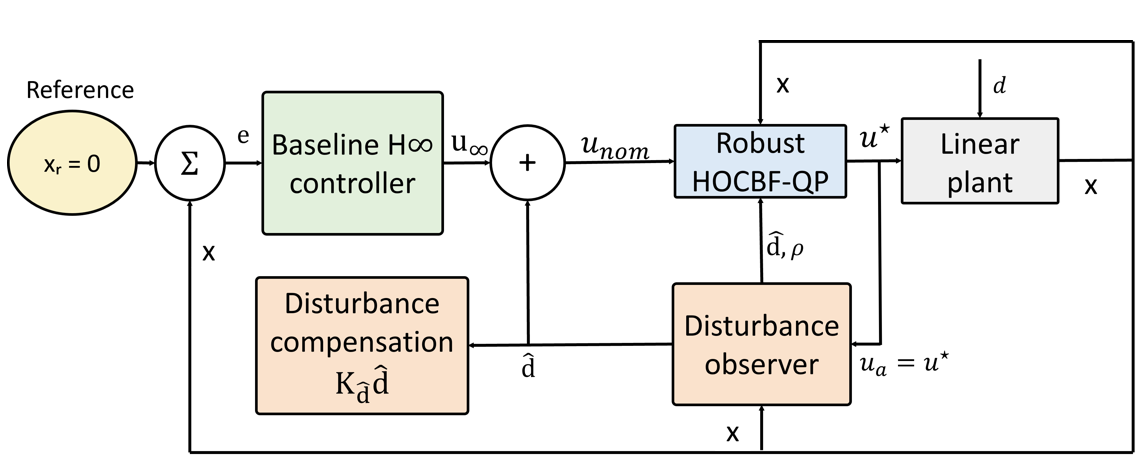}
    \caption{Block diagram of the proposed robust safety-filtering framework.}
    \label{fig:proposed-framework}
\end{figure}
\subsection{Nominal Infinite-Horizon $H_\infty$ Differential Game}
\label{subsec:hinf-game}

To account for unknown exogenous disturbances, the nominal linearized model introduced in Section~II is augmented with a disturbance channel as
\begin{equation}
    \dot x=Ax+Bu+Ed,\quad x(0)=x_0,
    \label{eq:lti-system}
\end{equation}
where $x\in\R^n$, $u\in\R^m$, $d\in\R^p$, and $E\in\R^{n\times p}$ denotes the disturbance distribution matrix.
The regulated output is $z=C_zx+D_zu$, with $Q:=C_z^\trans C_z\succeq0$, $R:=D_z^\trans D_z\succ0$, and $C_z^\trans D_z=0$, so that $z^\trans z=x^\trans Qx+u^\trans Ru$.

Following \cite{basar1995}, for a prescribed attenuation level $\gamma>0$, define
\begin{equation}
    J_\gamma(u,d;x_0):=\int_0^\infty\left(x^\trans Qx+u^\trans Ru-\gamma^2d^\trans d\right)dt.
    \label{eq:hinf-cost}
\end{equation}
Its upper value is $V_\gamma(x_0):=\inf_{u}\sup_{d}J_\gamma(u,d;x_0)$ over admissible causal strategies.

For the quadratic candidate $V(x)=x^\trans Px$ with $P=P^\trans\succeq0$, the Isaacs Hamiltonian is
\begin{equation}
\begin{aligned}
    \cH_P(x,u,d):={}&x^\trans Qx+u^\trans Ru-\gamma^2d^\trans d\\
    &+2x^\trans P(Ax+Bu+Ed).
\end{aligned}
\label{eq:isaacs-hamiltonian}
\end{equation}
Strict convexity in $u$ and strict concavity in $d$ yield $u_\infty(x)=-R^{-1}B^\trans Px$ and $d_{\mathrm H}^{\star}(x)=\gamma^{-2}E^\trans Px$.
The value matrix satisfies
\begin{equation}
    A^\trans P+PA+Q-PBR^{-1}B^\trans P+\gamma^{-2}PEE^\trans P=0.
    \label{eq:gare}
\end{equation}
Define $F:=-R^{-1}B^\trans P$, so that $u_\infty=Fx$.

\begin{assumption}[Stabilizing $H_\infty$ Riccati solution]
\label{ass:isaacs}
For the selected $\gamma>0$, \eqref{eq:gare} admits a symmetric solution $P\succeq0$ such that $A+BF$ and $A+BF+\gamma^{-2}EE^\trans P$ are Hurwitz.
The associated strategies satisfy the standard infinite-horizon admissibility and transversality conditions.
\end{assumption}

\begin{proposition}[$H_\infty$ residual identity]
\label{prop:isaacs-residual}
Under Assumption~\ref{ass:isaacs}, every admissible input $u$ and disturbance $d$ satisfy
\begin{equation}
    \dot V+z^\trans z-\gamma^2d^\trans d=\norm{u-u_\infty}_R^2-\gamma^2\norm{d-d_{\mathrm H}^{\star}}^2.
    \label{eq:isaacs-residual-identity}
\end{equation}
\end{proposition}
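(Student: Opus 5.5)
The identity is a pointwise algebraic consequence of completing squares in the Isaacs Hamiltonian \eqref{eq:isaacs-hamiltonian} and then using the Riccati equation \eqref{eq:gare}. Along any trajectory of \eqref{eq:lti-system} with $V(x)=x^\trans Px$, we have $\dot V=2x^\trans P(Ax+Bu+Ed)$. Since $z^\trans z=x^\trans Qx+u^\trans Ru$ by $C_z^\trans D_z=0$, the left-hand side of \eqref{eq:isaacs-residual-identity} is exactly $\cH_P(x,u,d)$. The proof therefore reduces to showing that
\begin{equation*}
\cH_P(x,u,d)=\norm{u-u_\infty}_R^2-\gamma^2\norm{d-d_{\mathrm H}^{\star}}^2
\end{equation*}
for all $x,u,d$. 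Assumption~\ref{ass:isaacs} enters only to guarantee that the symmetric $P$ solving \eqref{eq:gare} exists. Hurwitz stability and admissibility are not needed for this pointwise identity; they matter only later, for integration over time.

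First I complete the square in $u$. Because $R\succ0$ and $u_\infty=-R^{-1}B^\trans Px$,
\begin{equation*}
u^\trans Ru+2x^\trans PBu=\norm{u-u_\infty}_R^2-x^\trans PBR^{-1}B^\trans Px .
\end{equation*}
To verify this, expand $(u+R^{-1}B^\trans Px)^\trans R(u+R^{-1}B^\trans Px)$ and use the symmetry of $P$ and $R$.

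Next I complete the square in $d$. With $d_{\mathrm H}^{\star}=\gamma^{-2}E^\trans Px$,
\begin{equation*}
-\gamma^2d^\trans d+2x^\trans PEd=-\gamma^2\norm{d-d_{\mathrm H}^{\star}}^2+\gamma^{-2}x^\trans PEE^\trans Px .
\end{equation*}
This follows from expanding $-\gamma^2(d-\gamma^{-2}E^\trans Px)^\trans(d-\gamma^{-2}E^\trans Px)$.

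Substituting both expressions into $\cH_P$ and writing $2x^\trans PAx=x^\trans(A^\trans P+PA)x$ collects every term without $u$ or $d$ into the quadratic form
\begin{equation*}
x^\trans\left(A^\trans P+PA+Q-PBR^{-1}B^\trans P+\gamma^{-2}PEE^\trans P\right)x .
\end{equation*}
This form vanishes identically by \eqref{eq:gare}, which yields \eqref{eq:isaacs-residual-identity}.

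There is no real obstacle here; the only points needing care are the sign bookkeeping in the $d$-square and the symmetrization of the cross terms.
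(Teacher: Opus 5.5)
Your proof is correct and follows the same route as the paper: you identify the left-hand side with $\cH_P(x,u,d)$, complete the squares in $u$ and $d$, and cancel the remaining $x$-only quadratic form using \eqref{eq:gare}. Your remark that only the existence of a symmetric Riccati solution is needed for this pointwise identity is accurate; the Hurwitz and admissibility conditions matter only when the identity is integrated over time.
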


\begin{proof}
Completing the squares gives $u^\trans Ru+2x^\trans PBu=\norm{u-u_\infty}_R^2-x^\trans PBR^{-1}B^\trans Px$ and $-\gamma^2d^\trans d+2x^\trans PEd=-\gamma^2\norm{d-d_{\mathrm H}^{\star}}^2+\gamma^{-2}x^\trans PEE^\trans Px$.
Substitution into $\dot V+z^\trans z-\gamma^2d^\trans d$ leaves the quadratic form associated with \eqref{eq:gare}, which is zero.
\end{proof}

Setting $u=u_\infty$ in \eqref{eq:isaacs-residual-identity} gives $\dot V+z^\trans z-\gamma^2d^\trans d=-\gamma^2\norm{d-d_{\mathrm H}^{\star}}^2\leq0$.
Consequently, for $x_0=0$ and $d\in\mathcal L_2$, the nominal controller satisfies $\norm{z}_{\mathcal L_2}\leq\gamma\norm{d}_{\mathcal L_2}$.
The control law $u_\infty$ is the baseline $H_\infty$ state-feedback law; it does not use the disturbance estimate and is not the final applied input.
Similarly, $d_{\mathrm H}^{\star}$ is a model-generated maximizing policy and not an estimate of the realized disturbance.
The infinite-horizon $H_\infty$ interpretation applies to admissible disturbances in $\mathcal L_2$.
The subsequent observer and safety analysis only requires the bounded-rate condition in Assumption~\ref{ass:disturbance}; for bounded-rate disturbances outside $\mathcal L_2[0,\infty)$, the $H_\infty$ residual identity is interpreted over finite horizons.

\subsection{Disturbance Observer and Estimation-Error Bound}
\label{subsec:disturbance-observer}

The full state $x$ is assumed to be available for feedback, and the disturbance observer uses the input $u_a$ actually applied to the plant because this input may differ from the baseline $H_\infty$ input after disturbance compensation and safety filtering.
Assume that $E$ has full column rank and let $E^\dagger:=(E^\trans E)^{-1}E^\trans$ denote its left pseudoinverse.
Consider the observer
\begin{equation}
    \hat d=z_d+\lambda_dE^\dagger x, \quad \dot z_d=-\lambda_dE^\dagger\left(Ax+Bu_a+E\hat d\right),
    \label{eq:disturbance-observer}
\end{equation}
where $\lambda_d>0$ is the observer gain.
Using \eqref{eq:lti-system} with $u=u_a$ gives $\dot{\hat d}=\lambda_d(d-\hat d)$.
Define the estimation error as $e_d:=d-\hat d$, which satisfies $\dot e_d=\dot d-\lambda_de_d$.

\begin{assumption}[Disturbance regularity]
\label{ass:disturbance}
The disturbance $d$ is continuously differentiable and satisfies $\norm{\dot d(t)}\leq\omega_1$ for all $t\geq0$.
The initial estimation error satisfies $\norm{e_d(0)}\leq\bar e_0$.
\end{assumption}

For any $\nu\in(0,2\lambda_d)$, define $\alpha_d:=\lambda_d-\nu/2>0$.

\begin{lemma}[Estimation-error bound]
\label{lem:error-certificate}
Under Assumption~\ref{ass:disturbance}, the disturbance estimation error satisfies $\norm{e_d(t)}\leq\rho(t)$, where
\begin{equation}
    \rho(t):=\left[\bar e_0^2e^{-2\alpha_dt}+\frac{\omega_1^2}{2\nu\alpha_d}\left(1-e^{-2\alpha_dt}\right)\right]^{1/2}.
    \label{eq:error-certificate}
\end{equation}
\end{lemma}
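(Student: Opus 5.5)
The plan is a quadratic Lyapunov argument on the error dynamics $\dot e_d=\dot d-\lambda_d e_d$, followed by the comparison lemma. Set $W(t):=\norm{e_d(t)}^2$. Differentiating along trajectories gives
\begin{equation*}
\dot W=2e_d^\trans\dot d-2\lambda_d\norm{e_d}^2 .
\end{equation*}
The cross term $2e_d^\trans\dot d$ will be split with Young's inequality using the free parameter $\nu$:
\begin{equation*}
2e_d^\trans\dot d\leq\nu\norm{e_d}^2+\nu^{-1}\norm{\dot d}^2 .
\end{equation*}
Applying $\norm{\dot d}\leq\omega_1$ from Assumption~\ref{ass:disturbance} then yields the scalar differential inequality
\begin{equation*}
\dot W\leq-(2\lambda_d-\nu)W+\frac{\omega_1^2}{\nu}=-2\alpha_d W+\frac{\omega_1^2}{\nu}.
\end{equation*}
Here $2\alpha_d=2\lambda_d-\nu>0$ because $\nu\in(0,2\lambda_d)$.

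Next, the comparison lemma (equivalently, multiplying by the integrating factor $e^{2\alpha_d t}$ and integrating from $0$ to $t$) gives
\begin{equation*}
W(t)\leq W(0)e^{-2\alpha_d t}+\frac{\omega_1^2}{2\nu\alpha_d}\left(1-e^{-2\alpha_d t}\right).
\end{equation*}
Using $W(0)\leq\bar e_0^2$ and taking square roots, which is legitimate since both sides are nonnegative, recovers exactly $\norm{e_d(t)}\leq\rho(t)$ with $\rho$ as in \eqref{eq:error-certificate}.

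The only delicate points concern regularity. First, $e_d$ must be $C^1$. This holds because $d$ is $C^1$ by Assumption~\ref{ass:disturbance}, and $\hat d$ satisfies $\dot{\hat d}=\lambda_d(d-\hat d)$, which was derived from \eqref{eq:disturbance-observer} with $u=u_a$ using $E^\dagger E=I$. Second, the identity for $\dot{\hat d}$ must hold for every $t\geq0$ regardless of how $u_a$ is generated by the filter. I will check this by direct substitution: $\dot{\hat d}=\dot z_d+\lambda_dE^\dagger\dot x$, and the terms $Ax$ and $Bu_a$ cancel, leaving $\lambda_dE^\dagger E(d-\hat d)$. The scalar comparison step is routine; I expect no real obstacle beyond stating these facts, so $W$ is differentiable and the bound holds on the whole forward-complete interval $[0,\infty)$.
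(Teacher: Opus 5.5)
Your proposal is correct and follows the same argument as the paper: a quadratic Lyapunov function on $e_d$, Young's inequality with the free parameter $\nu$, and then the comparison lemma. The only difference is that you use $\norm{e_d}^2$ in place of the paper's $W_d=\frac12\norm{e_d}^2$; this rescaling is immaterial and gives the same constant $\omega_1^2/(2\nu\alpha_d)$.
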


\begin{proof}
Let $W_d:=\frac12\norm{e_d}^2$.
Using the error dynamics and Young's inequality gives $\dot W_d\leq-2\alpha_dW_d+\omega_1^2/(2\nu)$.
Applying the comparison lemma and using $\norm{e_d(0)}\leq\bar e_0$ yields \eqref{eq:error-certificate}.
\end{proof}

Define the time-varying disturbance uncertainty set
\begin{equation}
    \mathcal D(t):=\left\{\delta\in\R^p:\norm{\delta-\hat d(t)}\leq\rho(t)\right\}.
    \label{eq:safety-adversary-set}
\end{equation}
By Lemma~\ref{lem:error-certificate}, $d(t)\in\mathcal D(t)$ for all $t\geq0$.
Section~\ref{subsec:robust-hocbf} uses $\mathcal D(t)$ as the pointwise uncertainty set for the robust safety calculation.
This set is distinct from the saddle-point disturbance policy $d_{\mathrm H}^{\star}$ of the nominal $H_\infty$ game.

\subsection{Disturbance-Compensated Nominal Control}
To compensate for slowly varying or constant disturbances, let $C_rx$ denote the output whose steady-state offset is to be rejected.
Assume that the regulator equations
\begin{equation}
    AX_d+BU_d+E=0,\quad C_rX_d=0
    \label{eq:regulator-equations}
\end{equation}
admit matrices $X_d\in\R^{n\times p}$ and $U_d\in\R^{m\times p}$.
Define $K_{\hat d}:=U_d-FX_d$.
The disturbance-compensated nominal control input is
\begin{equation}
    u_{\mathrm{OI}}(x,\hat d):=F(x-X_d\hat d)+U_d\hat d=Fx+K_{\hat d}\hat d.
\end{equation}
\vspace*{4pt}

Define the shifted state and safety-filter correction as $\tilde x:=x-X_d\hat d$ and $v:=u_a-u_{\mathrm{OI}}$.
Using \eqref{eq:regulator-equations} and the estimate dynamics gives
\begin{equation}
    \dot{\tilde x}=(A+BF)\tilde x+(E-\lambda_dX_d)e_d+Bv.
    \label{eq:shifted-dynamics}
\end{equation}
Thus, the disturbance-compensated control input rejects the estimated disturbance component, while the remaining uncertainty is governed by the certified error $e_d$ and the safety-filter correction $v$.
The control input $u_{\mathrm{OI}}$ is not a new saddle solution of the original differential game. It is therefore treated as the nominal point to be filtered rather than as a replacement for $u_\infty$.

\subsection{Robust HOCBF constraint}
\label{subsec:robust-hocbf}

For a finite index set $\mathcal J$, let $h_j\in\mathcal C^2(\R^n,\R)$ define the safety requirement $\mathcal C_j:=\{x\in\R^n:h_j(x)\geq0\}$.
For any differentiable scalar function $\psi:\R^n\rightarrow\R$, define $L_A\psi(x):=\nabla\psi(x)^\trans Ax$, $L_B\psi(x):=\nabla\psi(x)^\trans B$, and $L_E\psi(x):=\nabla\psi(x)^\trans E$.
We select the linear extended class-$\mathcal K_\infty$ functions $\alpha_{j,k}(s)=\lambda_{j,k}s$, $k\in\{1,2\}$, with $\lambda_{j,k}>0$.
Assume that each $h_j$ has relative degree two with respect to the control channel and that the disturbance does not enter the first derivative:
\begin{equation}
    L_Bh_j=L_Eh_j=0,\quad L_BL_Ah_j\neq0.
    \label{eq:relative-degree-two-linear}
\end{equation}
Define the first-order auxiliary function $\phi_{j,1}(x):=L_Ah_j(x)+\lambda_{j,1}h_j(x)$, the drift term $\beta_j(x):=L_A\phi_{j,1}(x)+\lambda_{j,2}\phi_{j,1}(x)$, and the coefficients
\begin{equation}
    a_j(x):=L_BL_Ah_j(x),\quad c_j(x):=L_EL_Ah_j(x),
    \label{eq:hocbf-control-disturbance-coefficients}
\end{equation}
Along with \eqref{eq:lti-system}, the relative-degree-two HOCBF condition is $\beta_j(x)+a_j(x)u+c_j(x)d\geq0$.
The robust condition minimizes the disturbance contribution over $\mathcal D(t)$.
Since $d(t)\in\mathcal D(t)$, the worst-case disturbance is $\inf_{\delta\in\mathcal D(t)} c_j(x)\delta=c_j(x)\hat d-\rho(t)\norm{c_j(x)}$, attained at $\delta=\hat d-\rho(t)c_j(x)^\trans/\norm{c_j(x)}$ when $c_j(x)\neq0$.
Enforcing the HOCBF condition against this worst case yields a sufficient observer-certified robust condition
\begin{equation}
\begin{aligned}
    \Gamma_j^{\mathrm{rob}}(x,u,\hat d,t):={}&\beta_j(x)+a_j(x)u+c_j(x)\hat d\\
    &-\rho(t)\norm{c_j(x)}\geq0.
\end{aligned}
\label{eq:dynamic-robust-hocbf}
\end{equation}

Define the robust safety-admissible input set as
\begin{equation}
    \mathcal U_S(x,\hat d,t):=\left\{u\in\mathcal U:\Gamma_j^{\mathrm{rob}}(x,u,\hat d,t)\geq0,\ \forall j\in\mathcal J\right\}.
    \label{eq:robust-safe-input-set}
\end{equation}

\begin{proposition}[Robust forward invariance]
\label{prop:observer-certified-safety}
Suppose Assumption~\ref{ass:disturbance} holds and that the extended closed-loop system comprising \eqref{eq:lti-system}, \eqref{eq:disturbance-observer}, and the feedback $u_a(x,\hat d,t)$ admits a unique forward-complete solution.
Suppose, for every $j\in\mathcal J$, that $h_j(x(0))\geq0$ and $\phi_{j,1}(x(0))\geq0$.
If $\mathcal U_S(x,\hat d,t)$ is nonempty along the trajectory and $u_a(x,\hat d,t)$ is a locally Lipschitz selection from this set, then $\bigcap_{j\in\mathcal J}\{x\in\R^n:h_j(x)\geq0,\ \phi_{j,1}(x)\geq0\}$ is forward invariant.
\end{proposition}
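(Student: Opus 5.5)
The plan is to show, for each $j\in\mathcal J$, that the true second-order barrier inequality $\dot\phi_{j,1}\geq-\lambda_{j,2}\phi_{j,1}$ holds along the closed-loop trajectory. The cascaded comparison argument underlying the HOCBF theorem of \cite{xiao2022} then gives $\phi_{j,1}(x(t))\geq0$ and $h_j(x(t))\geq0$ for all $t\geq0$. Intersecting over the finite index set $\mathcal J$ then yields forward invariance of the stated set.

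\textbf{Step 1: derivative identities.} Because \eqref{eq:relative-degree-two-linear} gives $L_Bh_j=L_Eh_j=0$, differentiating $h_j$ along \eqref{eq:lti-system} gives $\dot h_j=L_Ah_j(x)$ regardless of $u_a$ and $d$. Hence $\dot h_j=\phi_{j,1}-\lambda_{j,1}h_j$. Differentiating $\phi_{j,1}$, and using the definitions of $\beta_j$, $a_j$, $c_j$ together with $L_Bh_j=L_Eh_j=0$ to drop the $\lambda_{j,1}$-terms in $u$ and $d$, gives
\[
\dot\phi_{j,1}+\lambda_{j,2}\phi_{j,1}=\beta_j(x)+a_j(x)u_a+c_j(x)d(t).
\]
Since $h_j\in\mathcal C^2$, this is valid along the absolutely continuous (indeed $\mathcal C^1$) solution.

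\textbf{Step 2: robust bound.} Lemma~\ref{lem:error-certificate} gives $\norm{d(t)-\hat d(t)}\leq\rho(t)$. Cauchy--Schwarz then yields
\[
c_j(x)d\geq c_j(x)\hat d-\rho(t)\norm{c_j(x)}.
\]
Because $u_a(x,\hat d,t)\in\mathcal U_S(x,\hat d,t)$ along the trajectory, this set being nonempty by hypothesis, we have $\Gamma_j^{\mathrm{rob}}(x,u_a,\hat d,t)\geq0$. Combining the two gives $\dot\phi_{j,1}\geq-\lambda_{j,2}\phi_{j,1}$ for all $t\geq0$. Note that this step needs the lemma's initial bound, which requires $\norm{e_d(0)}\leq\bar e_0$ from Assumption~\ref{ass:disturbance}.

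\textbf{Step 3: comparison.} The scalar linear comparison lemma, applied to $\frac{d}{dt}\bigl(e^{\lambda_{j,2}t}\phi_{j,1}\bigr)\geq0$, gives $\phi_{j,1}(x(t))\geq e^{-\lambda_{j,2}t}\phi_{j,1}(x(0))\geq0$. Then $\dot h_j\geq-\lambda_{j,1}h_j$, and the same argument gives $h_j(x(t))\geq e^{-\lambda_{j,1}t}h_j(x(0))\geq0$.

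\textbf{Main obstacle.} The key point is that the safety constraint is time-varying and depends on the observer state $\hat d$. The standard state-feedback HOCBF theorem therefore does not apply verbatim. I avoid this by arguing directly along the given forward-complete solution of the extended system, rather than via a Nagumo-type argument on a fixed set. The explicit integrating-factor argument needs only differentiability of $t\mapsto\phi_{j,1}(x(t))$, and the Lipschitz selection hypothesis is used only to guarantee that this solution exists and is unique.
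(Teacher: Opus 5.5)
Your proposal is correct and follows the paper's proof essentially step for step: you use Cauchy--Schwarz with the observer bound $\norm{d-\hat d}\leq\rho(t)$ to turn $\Gamma_j^{\mathrm{rob}}\geq0$ into $\dot\phi_{j,1}+\lambda_{j,2}\phi_{j,1}\geq0$, and two integrating-factor arguments then give $\phi_{j,1}(x(t))\geq0$ and $h_j(x(t))\geq0$. Your Step~1 makes explicit, via $L_Bh_j=L_Eh_j=0$, the identity $\dot\phi_{j,1}+\lambda_{j,2}\phi_{j,1}=\beta_j+a_ju_a+c_jd$, which the paper uses without derivation.
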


\begin{proof}
Fix any $j\in\mathcal J$.
Because $u_a(x,\hat d,t)\in\mathcal U_S(x,\hat d,t)$, the robust constraint satisfies $\Gamma_j^{\mathrm{rob}}(x,u_a,\hat d,t)\geq0$.
Lemma~\ref{lem:error-certificate} and the Cauchy--Schwarz inequality give $c_j(x)d\geq c_j(x)\hat d-\rho(t)\norm{c_j(x)}$.
Consequently, $\dot\phi_{j,1}+\lambda_{j,2}\phi_{j,1}=\beta_j+a_ju_a+c_jd\geq\Gamma_j^{\mathrm{rob}}(x,u_a,\hat d,t)\geq0$.
Multiplying this differential inequality by $e^{\lambda_{j,2}t}$ gives $\frac{d}{dt}\!\left(e^{\lambda_{j,2}t}\phi_{j,1}(x(t))\right)\geq0$, and integration over $[0,t]$ yields $\phi_{j,1}(x(t))\geq e^{-\lambda_{j,2}t}\phi_{j,1}(x(0))\geq0$.
Because $\dot h_j+\lambda_{j,1}h_j=\phi_{j,1}\geq0$, multiplication by $e^{\lambda_{j,1}t}$ gives $\frac{d}{dt}\!\left(e^{\lambda_{j,1}t}h_j(x(t))\right)\geq0$.
Integration over $[0,t]$ therefore yields $h_j(x(t))\geq e^{-\lambda_{j,1}t}h_j(x(0))\geq0$.
Since $j\in\mathcal J$ was arbitrary, the conclusion holds simultaneously for every safety constraint.
\end{proof}

\subsection{Robust Safety Filter and Scalar Feasibility}
\label{subsec:observer-informed-qp}

Whenever \eqref{eq:robust-safe-input-set} is nonempty, the applied input is obtained from
\begin{equation}
\begin{aligned}
    u^\star(x,\hat d,t):=\arg\min_{u\in\mathcal U}\quad \frac12\norm{u-u_{\mathrm{nom}}}_R^2\\
    \mathrm{s.t.}\quad \Gamma_j^{\mathrm{rob}}(x,u,\hat d,t)\geq0,\quad \forall j\in\mathcal J.
\end{aligned}
\end{equation}
The applied input is set to $u_a=u^\star$.
The choice of $R$ aligns the projection metric with the control penalty in the nominal $H_\infty$ differential game, but forward invariance depends on constraint satisfaction and feasibility rather than on this objective weighting.
For scalar-input systems, $R>0$ only scales the objective and therefore does not change the optimizer.
The projection is centered at $u_{\mathrm{nom}}$ because disturbance compensation precedes safety filtering.

For scalar-input systems, let $\mathcal U=[u_{\min},u_{\max}]$ and write each robust HOCBF constraint as $a_j(x)u\geq b_j(x,\hat d,t)$, where $b_j(x,\hat d,t):=-\beta_j(x)-c_j(x)\hat d+\rho(t)\norm{c_j(x)}$.
Using the conventions $\max\varnothing=-\infty$ and $\min\varnothing=+\infty$, define
\begin{equation}
\begin{aligned}
    \underline u_S&:=\max\left\{u_{\min},\max_{j:a_j>0}\frac{b_j}{a_j}\right\},\\
    \overline u_S&:=\min\left\{u_{\max},\min_{j:a_j<0}\frac{b_j}{a_j}\right\}.
\end{aligned}
\label{eq:scalar-safe-interval-bounds}
\end{equation}
For every constraint satisfying $a_j=0$, feasibility additionally requires $b_j\leq0$.
Consequently, the robust safe-input set is $\mathcal U_S=[\underline u_S,\overline u_S]$ if all zero-coefficient constraints satisfy $b_j\leq0$ and $\underline u_S\leq\overline u_S$.
These conditions are necessary and sufficient for scalar-input feasibility.

When feasible, the optimizer admits the closed-form projection
\begin{equation}
    u^\star=\min\left\{\overline u_S,\max\left\{\underline u_S,u_{\mathrm{nom}}\right\}\right\}.
    \label{eq:scalar-safety-projection}
\end{equation}
Define the feasibility margin $\mu_S:=\overline u_S-\underline u_S$.
Provided that $b_j\leq0$ for every constraint satisfying $a_j=0$, the cases $\mu_S>0$, $\mu_S=0$, and $\mu_S<0$ correspond to a nonempty interval with interior, a singleton feasible input, and an empty robust safe-input set, respectively.
If any zero-coefficient constraint satisfies $b_j>0$, then $\mathcal U_S$ is empty regardless of $\mu_S$.
If $\mathcal U_S$ is empty, no admissible input satisfies all hard HOCBF and actuator constraints, and the safety guarantee of Proposition~\ref{prop:observer-certified-safety} does not apply.

\subsection{Finite-Horizon $H_\infty$ Performance Evaluation}
\label{subsec:performance-accounting}

For the same state $x$ and disturbance $d$, the control completion-of-squares identity used in Proposition~\ref{prop:isaacs-residual} gives
\begin{equation}
    \cH_P(x,u^\star,d)-\cH_P(x,u_\infty,d)
    =\norm{u^\star-u_\infty}_R^2.
    \label{eq:input-deviation-identity}
\end{equation}
Thus, the weighting matrix $R$ is inherited from the control penalty of the nominal $H_\infty$ differential game.
Define the accumulated squared $R$-weighted norm of the input deviation over $[0,T]$ as
\begin{equation}
    J_{\mathrm{dev}}(T)
    :=\int_0^T\norm{u^\star(t)-u_\infty(x(t))}_R^2dt.
    \label{eq:accumulated-input-deviation}
\end{equation}
The quantity $J_{\mathrm{dev}}(T)$ is evaluated along the safety-filtered trajectory and measures the departure of the applied input from the baseline policy at the same state.

\begin{proposition}[Finite-horizon $H_\infty$ performance balance]
\label{prop:finite-horizon-balance}
Under Assumption~\ref{ass:isaacs}, suppose the robust HOCBF-QP is feasible on $[0,T]$, and let $z^\star:=C_zx+D_zu^\star$.
Then the closed-loop trajectory satisfies
\begin{equation}
\begin{aligned}
    V(x(T))&+\int_0^T(z^\star)^\trans z^\star\,dt
    +\gamma^2\int_0^T\norm{d-d_{\mathrm H}^{\star}}^2dt\\
    ={}&V(x_0)+\gamma^2\int_0^T\norm{d}^2dt
    +J_{\mathrm{dev}}(T).
\end{aligned}
\label{eq:finite-horizon-hinfinity-balance}
\end{equation}
\end{proposition}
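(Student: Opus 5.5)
The plan is to integrate the pointwise residual identity of Proposition~\ref{prop:isaacs-residual} along the safety-filtered trajectory. Since the robust HOCBF-QP is feasible on $[0,T]$, the applied input $u_a=u^\star(x(t),\hat d(t),t)$ is defined for every $t\in[0,T]$. The closed-loop state obeys \eqref{eq:lti-system} with $u=u^\star$. Because Proposition~\ref{prop:isaacs-residual} is an algebraic identity that holds for every input value and every disturbance value at a given state, I will apply it pointwise in time with $u=u^\star(t)$ and $d=d(t)$. This gives, for almost every $t$,
\begin{equation*}
\frac{d}{dt}V(x(t))+(z^\star)^\trans z^\star-\gamma^2d^\trans d=\norm{u^\star-u_\infty(x)}_R^2-\gamma^2\norm{d-d_{\mathrm H}^{\star}(x)}^2 .
\end{equation*}
Here $\dot V=2x^\trans P(Ax+Bu^\star+Ed)$ is the derivative along the actual trajectory, and $z^\star=C_zx+D_zu^\star$ satisfies $(z^\star)^\trans z^\star=x^\trans Qx+(u^\star)^\trans Ru^\star$ because $C_z^\trans D_z=0$.

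The next step is to integrate this identity over $[0,T]$. The left-hand side contributes $V(x(T))-V(x_0)+\int_0^T(z^\star)^\trans z^\star dt-\gamma^2\int_0^T\norm{d}^2dt$. The right-hand side contributes $J_{\mathrm{dev}}(T)-\gamma^2\int_0^T\norm{d-d_{\mathrm H}^{\star}}^2dt$, using definition \eqref{eq:accumulated-input-deviation}. Rearranging these terms gives \eqref{eq:finite-horizon-hinfinity-balance} exactly. Note that only the Riccati equation \eqref{eq:gare} from Assumption~\ref{ass:isaacs} is used; the Hurwitz and transversality conditions are not needed on a finite horizon.

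The main obstacle is regularity. The fundamental theorem of calculus requires $t\mapsto V(x(t))$ to be absolutely continuous, and each integrand to be integrable on $[0,T]$. I will obtain this from the standing assumption, stated in Proposition~\ref{prop:observer-certified-safety}, that the closed loop has a unique forward-complete solution with a locally Lipschitz (hence continuous) selection $u^\star$. Under that assumption $x(t)$ is absolutely continuous on the compact interval $[0,T]$, and $x$, $u^\star$, and $d$ are bounded there. The functions $d$ and $\rho$ are continuous by Assumption~\ref{ass:disturbance} and \eqref{eq:error-certificate}, and $\hat d$ is continuous by the observer dynamics. Consequently every integrand is continuous, or at least bounded and measurable, and all integrals are finite. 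If $u^\star$ is only measurable, for instance because the QP solution fails to be Lipschitz at constraint switches, I will instead argue with Carathéodory solutions: $V\circ x$ is still absolutely continuous, and the identity holds almost everywhere, which is enough.
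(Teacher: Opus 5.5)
Your proposal is correct and follows the paper's proof: apply the residual identity of Proposition~\ref{prop:isaacs-residual} pointwise with $u=u^\star$, integrate over $[0,T]$, identify the control-deviation integral with $J_{\mathrm{dev}}(T)$, and rearrange. Your extra points go beyond what the paper states without changing the argument: the absolute continuity of $V(x(t))$, the Carath\'eodory fallback, and the observation that only the Riccati equation \eqref{eq:gare} is actually used.
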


\begin{proof}
Set $u=u^\star$ in \eqref{eq:isaacs-residual-identity}.
By \eqref{eq:accumulated-input-deviation},
\begin{equation}
    \int_0^T\norm{u^\star-u_\infty}_R^2dt
    =J_{\mathrm{dev}}(T).
\end{equation}
Integrating the residual identity over $[0,T]$ and rearranging the resulting terms gives \eqref{eq:finite-horizon-hinfinity-balance}.
\end{proof}

For $x_0=0$, dropping the nonnegative terminal and disturbance-mismatch terms in \eqref{eq:finite-horizon-hinfinity-balance} gives
\begin{equation}
    \int_0^T(z^\star)^\trans z^\star\,dt
    \leq
    \gamma^2\int_0^T\norm{d}^2dt
    +J_{\mathrm{dev}}(T).
    \label{eq:modified-hinfinity-bound}
\end{equation}
When $J_{\mathrm{dev}}(T)=0$, the nominal finite-horizon energy inequality follows directly from \eqref{eq:finite-horizon-hinfinity-balance}.
More generally, the same inequality is retained whenever
\begin{equation}
    J_{\mathrm{dev}}(T)
    \leq
    V(x(T))
    +\gamma^2\int_0^T\norm{d-d_{\mathrm H}^{\star}}^2dt.
    \label{eq:nominal-bound-condition}
\end{equation}
Otherwise, \eqref{eq:modified-hinfinity-bound} provides a finite-horizon output-energy bound containing the additional input-deviation term $J_{\mathrm{dev}}(T)$.

\begin{figure*}[t]
    \centering
    \vspace*{0.08in}
    \includegraphics[width=0.9\textwidth]{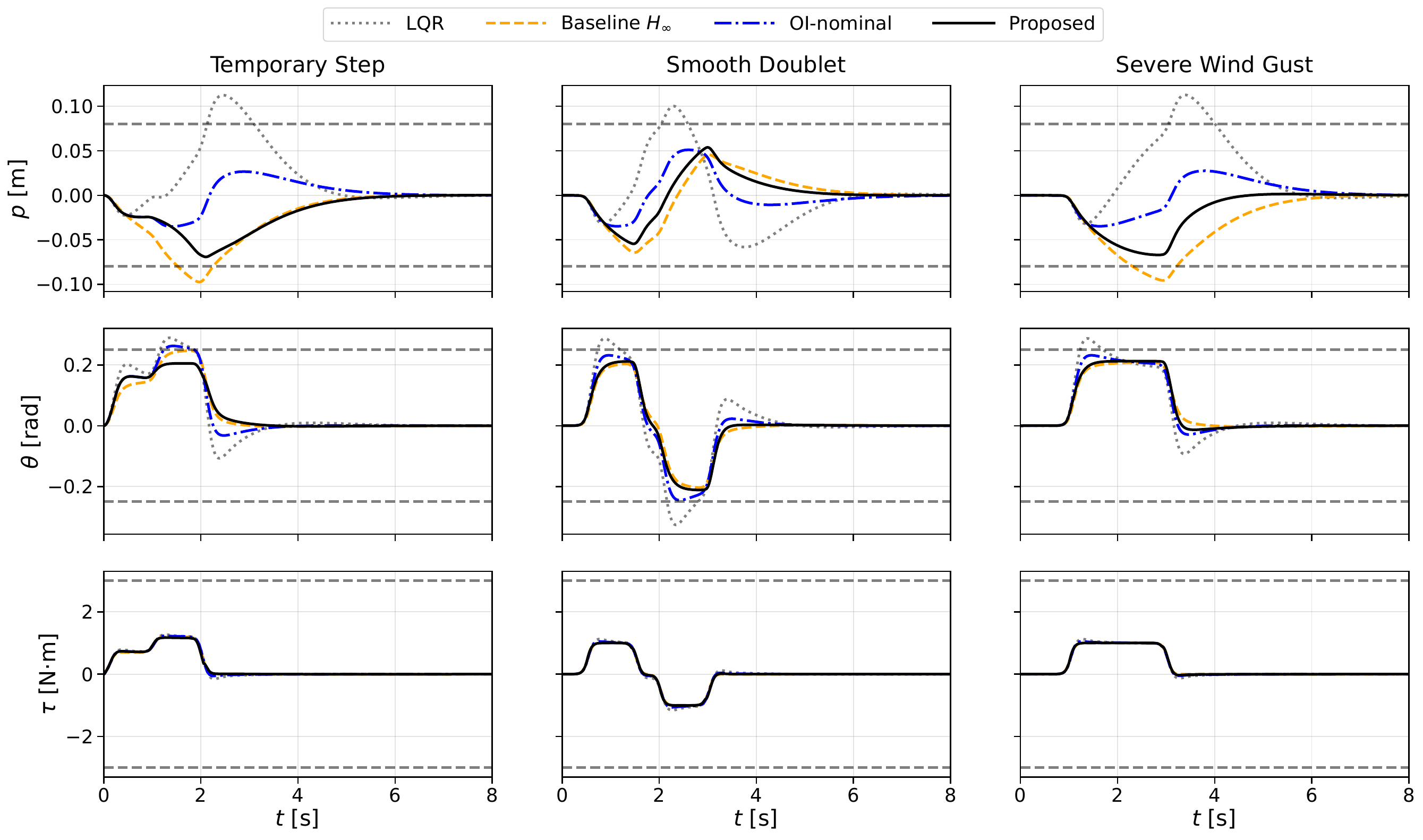}
    \caption{Closed-loop TWBR responses under the temporary-step, smooth-doublet, and severe-gust disturbances. The rows show position, body-pitch angle, and wheel torque, with dashed lines denoting their limits.}
    \label{fig:simulation-results}
\end{figure*}
\section{Numerical Example}
\label{sec:twbr}

We evaluate the proposed filter on the TWBR model of \cite{fikri2021}.

\subsection{Underactuated Model Characterization}
\label{subsec:twbr-underactuation}

The TWBR has configuration $q=[p\ \theta]^\trans$ and a single wheel-torque input $\tau$, so $\operatorname{rank}B_q(q)=1<2=n_q$ and the underactuation condition of Section~II holds.
Linearization of the equilibrium, with $x=[p\ v\ \theta\ \omega]^\trans$, $v=\dot p$, $\omega=\dot\theta$, gives $\dot x=Ax+B\tau$ with
\begin{equation}
\begin{aligned}
    A&=\begin{bmatrix}
        0&1&0&0\\
        0&-0.8122&-1.5442&0\\
        0&0&0&1\\
        0&3.1481&44.0083&0
    \end{bmatrix}\
    B=\begin{bmatrix}
        0 \\ 5.6351\\0\\-60.6013
    \end{bmatrix}.
\end{aligned}
\label{eq:twbr-nominal-matrices}
\end{equation}
Since $\operatorname{rank}B=1<n_q=2$, the linearization preserves the local underactuation structure.

For $\sigma\in\{-1,1\}$, the position and body-pitch barriers are $h_{p,\sigma}(x):=p_{\max}-\sigma p$ and $h_{\theta,\sigma}(x):=\theta_{\max}-\sigma\theta$, respectively.
Both barriers have relative degree two by \eqref{eq:relative-degree-two-linear}, with input coefficients obtained from \eqref{eq:hocbf-control-disturbance-coefficients},
\begin{equation}
\begin{aligned}
    a_{p,\sigma}&=-5.6351\sigma,\\
    a_{\theta,\sigma}&=60.6013\sigma.
\end{aligned}
\label{eq:twbr-input-coefficients}
\end{equation}
For each $\sigma$, these coefficients have opposite signs; hence, the two HOCBF constraints impose opposite-sided bounds on the same torque, the shared-actuation consequence of underactuation that drives the feasibility analysis of Section~III.

\subsection{Disturbance Safety-Filter Evaluation}
\label{subsec:twbr-disturbance}

Augmenting \eqref{eq:twbr-nominal-matrices} with the horizontal-force disturbance $d$ gives $\dot x=Ax+B\tau+Ed$, where $E=[0\ -0.5320\ 0\ 5.1629]^\trans$.
The corresponding disturbance coefficients are $c_{p,\sigma}=0.5320\sigma$ and $c_{\theta,\sigma}=-5.1629\sigma$, so \eqref{eq:dynamic-robust-hocbf} couples the competing torque bounds \eqref{eq:twbr-input-coefficients} with the certified uncertainty $\rho(t)\norm{c_j}$.
All guarantees and simulations concern this linearized model; extending them to the nonlinear plant would require incorporating a bound on the linearization remainder into the robust HOCBF condition.

Solving \eqref{eq:gare} with $Q=\operatorname{diag}(1.613^2,0,1.779^2,0)$, $R=0.733^2$, and $\gamma=0.086$ gives $F=[2.994763\ 3.486833\ 6.553652\ 0.948437]$.
The observer and HOCBF gains are $(\lambda_d,\nu)=(120,80)$ and $\lambda_{j,1}=\lambda_{j,2}=8$, while $\omega_1$ is selected offline from a known bound on $\abs{\dot d}$ and does not require online measurement of the disturbance derivative.

Four controllers are compared: the LQR controller with gain $F_{\mathrm{LQR}}=[2.20055\ 1.91318\ 3.97000\ 0.52992]$, the baseline $H_\infty$ controller $\tau_\infty=Fx$ \cite{fikri2021}, the nominal controller $\tau_{\mathrm{OI}}=Fx+K_{\hat d}\hat d$, and the proposed controller $\tau^\star$, which projects $\tau_{\mathrm{OI}}$ onto the robust safe interval.
All controllers use the symmetric actuator set $\mathcal U=[-\tau_{\max},\tau_{\max}]$, where $\tau_{\max}=3$ N$\cdot$m.
The regulator output $C_r=[1\ 0\ 0\ 0]$ yields $K_{\hat d}=-0.0335519$.
All simulations use $x(0)=0$, $z_d(0)=0$, $T=8$ s, $p_{\max}=0.08$ m, and $\theta_{\max}=0.25$ rad.
The closed-loop equations are integrated using RK45 with a maximum step of $10^{-3}$ s, relative tolerance $10^{-7}$, and absolute tolerance $10^{-9}$.

Let $s_{t_0}(t):=\frac12[1+\tanh(10(t-t_0))]$.
The temporary-step, smooth-doublet, and severe-gust disturbances are, respectively,
\begin{equation}
\begin{aligned}
    d_{\mathrm{step}}(t)&=7s_{0.1}(t)+5s_{1.0}(t)-12s_{2.0}(t),\\
    d_{\mathrm{dbl}}(t)&=10[s_{0.5}(t)-s_{1.5}(t)-s_{2.0}(t)+s_{3.0}(t)],\\
    d_{\mathrm{gust}}(t)&=10[s_{1.0}(t)-s_{3.0}(t)].
\end{aligned}
\label{eq:twbr-disturbances}
\end{equation}
Since $x(0)=z_d(0)=0$, the initial-error magnitudes for the temporary-step, smooth-doublet, and severe-gust scenarios are $0.8344205$, $0.0004540$, and $2.0612\times10^{-8}$ N, while the corresponding derivative bounds are $60.0000$, $50.0091$, and $50.0000$ N/s.
Accordingly, the scenario-specific conservative certificate parameters $(\bar e_0,\omega_1)$ are selected as $(0.85,65)$, $(0.001,55)$, and $(0.001,55)$, respectively.

Figure~\ref{fig:simulation-results} shows that the proposed controller satisfies both state constraints in all three scenarios.
Among the compared controllers, it is the only one that satisfies both constraints under the temporary-step disturbance.
The LQR violates both constraints in every scenario; the baseline $H_\infty$ controller violates the position limit under the temporary-step and severe-gust disturbances; and the OI-nominal controller violates only the pitch limit under the temporary step.
Relative to the OI-nominal response, the proposed filter lowers the pitch peak while allowing a larger position excursion, reflecting the competition created by the shared torque input.
The proposed peak torques are $1.17098$, $1.00301$, and $1.00425$ N$\cdot$m.
Along the proposed trajectories, $\rho(t)-\abs{d(t)-\hat d(t)}$, $h_j$, $\phi_{j,1}$, and $\Gamma_j^{\mathrm{rob}}$ remain nonnegative within numerical tolerance, while the minimum safe-interval margins are $0.30496$, $0.44521$, and $0.44521$ N$\cdot$m.
For the temporary-step scenario, the accumulated squared $R$-weighted norm of the input deviation is $J_{\mathrm{dev}}(8)=0.04347$.
The absolute difference between the two sides of the finite-horizon $H_\infty$ performance balance is $3.5\times10^{-8}$.

\section{Conclusion}

We developed an observer-based robust safety filter built on a baseline $H_\infty$ input derived from a zero-sum differential game.
The disturbance estimate adjusted this input and, together with its transient error bound, entered the robust HOCBF constraints; forward invariance followed while the safe-input set remained nonempty.
For scalar-input systems, the exact interval form provided a pointwise feasibility test and margin, while the finite-horizon $H_\infty$ performance balance accounted for the accumulated deviation of the applied input from the baseline $H_\infty$ policy.
In the linearized TWBR study, the proposed controller respected all state and actuator limits and was the only compared controller to satisfy both state constraints under the temporary-step disturbance.

\bibliographystyle{IEEEtran}
\bibliography{cbf_observer_references}

\end{document}